\newtheorem{theorem}{Theorem}
\newtheorem{lemma}{Lemma}
\newtheorem{example}{Example}
\newcommand{\beqno}{ \begin{equation*} }
\newcommand{\eeqno}{ \end{equation*} }
\newcommand{\beq}{ \begin{equation} }
\newcommand{\eeq}{ \end{equation} }
\newcommand{\calB}{\mathcal{B}}
\newcommand{\calD}{\mathcal{D}}
\newcommand{\Mod}[1]{\ (\text{mod}\ #1)}
\begin{document}
%
\title{Capacity of Sum-networks for Different Message Alphabets}

\author{\IEEEauthorblockN{Ardhendu Tripathy and Aditya Ramamoorthy}
\IEEEauthorblockA{Department of Electrical and Computer Engineering,
Iowa State University, Ames, Iowa 50011\\
Email: \{ardhendu,adityar\}@iastate.edu}
\thanks{This work was supported in part by NSF grants CCF-1320416 and CCF-1149860.}
}


%


\maketitle

\begin{abstract}
A \textit{sum-network} is a directed acyclic network in which all terminal nodes demand the `sum' of the independent information observed at the source nodes. Many characteristics of the well-studied multiple-unicast network communication problem also hold for sum-networks due to a known reduction between instances of these two problems. Our main result is that unlike a multiple unicast network, the coding capacity of a sum-network is dependent on the message alphabet. We demonstrate this using a construction procedure and show that the choice of a message alphabet can reduce the coding capacity of a sum-network from $1$ to close to $0$.

\textit{Index terms}---capacity, function computation, linear network coding, network coding, sum-network.
\end{abstract}


%
\IEEEpeerreviewmaketitle

\section{Introduction}
Function computation using network coding is an extensively studied research area \cite{appuFKZ11, appuFKZ13,ramamoorthyL13, raiD12}.  The typical setting is one of directed acyclic networks with error-free links, a set of source nodes that generate independent and identically distributed information and a set of terminal nodes with specific demands. The work of \cite{appuFKZ11, appuFKZ13} considers general functions, but networks with only one terminal. A different line of work considers networks with multiple terminals that each need a simple function such as the sum \cite{ramamoorthyL13, raiD12}.

In this work, we consider sum-networks. 
There are several results about the solvability of multiple unicast networks, many of which also hold for sum-networks by virtue of a reduction from a multiple unicast instance to a sum-network instance as described in \cite{raiD12}. 
Specifically, it was shown in \cite{DoughertyFZ05} that linear network coding in the most general sense is unable to attain the coding capacity of multiple unicast networks. 
Reference \cite{DoughertyFZ08} showed that there exists a multiple unicast network which is scalar linear solvable over a finite field $\mathcal{F}$ if and only if a corresponding collection of integer-coefficient polynomials have a common root in $\mathcal{F}$. Analogous statements for sum-networks were noted to be true in \cite{raiD12} because of the equivalence. Reference \cite{CannonsDFZ06} showed that the coding capacity of a general multiple unicast network is independent of the alphabet used.

A network code for a sum-network is said to have rate $m/n$ if in $n$ time slots, one can multicast the sum $m$ times to all the terminals. A network is called solvable if it has a $(m,m)$ code and not solvable otherwise. Prior work has investigated the effect of field characteristic on sum-network capacity. Specifically, \cite{raiD12} constructs sum-networks whose solvability depends on the field characteristic under the restriction of linear network coding. Given a ratio $p/q<1$, it was shown in \cite{raiD13} that there exist sum-networks which have $p/q$ as their coding capacity. Reference \cite{tripathyR14} described a systematic construction procedure for sum-networks with capacity $p/q$; these are typically smaller than those in \cite{raiD13}. The capacity of these sum-networks however was independent of the message alphabet.

\subsection{Main contribution}
\noindent $\bullet~$ In this work, we construct sum-networks whose capacity depends on the characteristic of the finite field $\mathcal{F}$ chosen as the message alphabet. As a specific example, we construct sum-networks for which the capacity is $1$ if $ch(\mathcal{F})=2$ and can be made arbitrarily close to  $0$ if $ch(\mathcal{F})\neq 2$.

The problem is formally posed in section \ref{sec:problem}, our construction for the class of sum-networks is explained in section \ref{sec:construction}. A cut-set bound on the coding capacity is described in section \ref{sec:upbound} and a linear network code that attains this rate is described in section \ref{sec:lincode}. 
\section{Problem formulation}
\label{sec:problem}
We consider communication over a directed acyclic graph (DAG) $G=(V,E)$ where $V$ is the  set of nodes and $E \in V \times V$ are the edges denoting the delay-free communication links between them.
Subset $S \subset V$ denotes the source nodes and $T \subset V$ denotes the terminal nodes. The source nodes have no incoming edges and the terminal nodes have no outgoing edges. Each source node $s_i \in S$ generates an independent random process $X_i$, such that the sequence of random variables $X_{i1}, X_{i2}, \dots$ indexed by time are i.i.d. and each $X_{ij}$ takes values that are uniformly distributed over a finite alphabet $\mathcal{F}$ that is assumed to be a finite field such that $|\mathcal{F}| = q$; the characteristic of $\mathcal{F}$ will be denoted by $ch(\mathcal{F})$. An edge starting at $u$ and ending in $v$ in the DAG will be denoted as the ordered pair $(u,v)$. We define head$(u,v)=v$ and tail$(u,v)=u$. Each edge is of unit capacity and can transmit one symbol from $\mathcal{F}$ per unit time.
We use the notation $\text{In}(v)$ to represent the set of incoming edges at node $v \in V$. If $e=(u,v)$, then we set $\text{In}(e)=\text{In}(u)$.
%

A network code is an assignment of encoding functions (we call these the local encoding functions) to each edge in $E$ and a decoding function to each terminal in $T$. The local encoding function for an edge connected to a set of sources depends only those particular source values. Likewise, the local encoding function for an edge that is not connected to any source depends on the values received on its incoming edges and the decoding function for a terminal depends only on its incoming edges. As we consider directed acyclic networks, it can be seen that we can also define a global encoding function that expresses the value transmitted on an edge in terms of the source values.
\begin{itemize}
\item Local encoding function for edge $e$. 
\begin{align*}
\tilde{\phi}_e &: \mathcal{F}^m \rightarrow \mathcal{F}^n ~~ \text{if tail}(e)\in S,  \\
\tilde{\phi}_e &: \mathcal{F}^{n|\text{In}(\text{tail}(e))|} \rightarrow \mathcal{F}^n ~~ \text{if tail}(e) \notin S.
\end{align*}

\item Decoding function for the terminal $t_i \in T$.
\begin{equation*}
\psi_{t_i} : \mathcal{F}^{n|\text{In}(t_i)|} \rightarrow \mathcal{F}^m
\end{equation*}
\end{itemize}

A network code is linear if all the edge and decoding functions are $\mathcal{F}$-linear. For the sum-networks that we consider, a $(m,n)$ fractional network code solution over $\mathcal{F}$ is such that the sum (over $\mathcal{F}$) of $m$ source symbols can be communicated to all the terminals in $n$ units of time. The rate of this network code is defined to be $m/n$. A network is said to be solvable if it has a $(m,m)$ network coding solution for some $m \geq 1$. A network is said to have a scalar solution if it has a $(1,1)$ solution. The supremum of all achievable rates is called the capacity of the network.

\section{Construction using BIBDs}
\label{sec:construction}
We begin by defining a $2-(v,k,\lambda)$ balanced incomplete block design (BIBD) \cite{Stinson}. It is a set system $\calD = (P,\calB)$ which has the following components.
\begin{itemize}
\item \textit{Points}: A set $P$ consisting of $v$ elements (called points), indexed in arbitrary order as $P = \{p_1,p_2,\ldots,p_v\}$.
\item \textit{Blocks}: A set $\mathcal{B}$ of size $b$ whose elements are $k-$subsets of $P$ such that $\mathcal{B}=\{B_1,B_2,\ldots,B_b\}$. $\calB$ satisfies the following \textit{regularity} property. For $p_i,p_j \in P, i \neq j$,
\begin{align*}
|\{B \in \mathcal{B} : p_i \in B, p_j \in B\}|=\lambda.
\end{align*}
\end{itemize}
For such a design, we can define an incidence matrix $A$ which is a $v \times b ~(0,1)$-matrix that records the incidence between points and blocks, i.e.,
\begin{align*}
A(i,j)=\left\{\begin{matrix}
1 &~\text{if}~p_i \in B_j, \\
0 &~\text{otherwise}.
\end{matrix}\right.
\end{align*}
It can be shown that each point is present in a fixed number of blocks (denoted by $r$). The following relations can be shown.
\begin{align*}
r&=\frac{\lambda (v-1)}{k-1}, \text{~and}\\
bk&=vr.
\end{align*} 
For any $p \in P$ and $B \in \mathcal{B}$, let
\begin{align*}
\langle p \rangle &=\{B \in \mathcal{B} : p \in B\}, \text{and}\\
\langle B \rangle&= \cup_{p \in B} \langle p \rangle=\cup_{p \in B}\{B' \in \mathcal{B} : p \in B'\} .
\end{align*} Thus we have $|\langle p\rangle|=r, ~\forall p \in P$. For the case of $\lambda =1$, we have that for any $p,p' \in B$, $\langle p\rangle \cap \langle p'\rangle=B$, i.e., a pair of points appear in a unique block.
\begin{example}
\label{ex:fano}
We describe the components of a $2-(7,3,1)$ design, which is also called a Fano plane. Letting numerals denote points and alphabets denote blocks for this design, we can write:
\begin{align*}
P&=\{1,2,3,4,5,6,7\},~\mathcal{B}=\{A,B,C,D,E,F,G\}, \\ A&=\{1,2,3\}, ~B=\{3,4,5\},~ C=\{1,5,6\},~D=\{1,4,7\}, \\  E&=\{2,5,7\}, ~F=\{3,6,7\},~G=\{2,4,6\}.
\end{align*}
The corresponding incidence matrix $A$ is shown below.
\begin{align}
A=
\begin{bmatrix}
1 & 0 & 1 & 1 & 0 & 0 & 0\\1 & 0 & 0 & 0 & 1 & 0 & 1\\1 & 1 & 0 & 0 & 0 & 1 & 0\\0 & 1 & 0 & 1 & 0 & 0 & 1\\0 & 1 & 1 & 0 & 1 & 0 & 0\\0 & 0 & 1 & 0 & 0 & 1 & 1\\0 & 0 & 0 & 1 & 1 & 1 & 0
\end{bmatrix}.
\label{eq:A}
\end{align}
\end{example}
We now construct a sum-network $G=(V,E)$ from any BIBD $\mathcal{D}$. 
We first describe the components of the vertex set $V$.
\begin{enumerate}
\item \textit{Source node set}: $S$ contains $v+b$ elements corresponding to points and blocks in $\calD$, i.e.,
\begin{align*}
S&=\{s_{p_1},s_{p_2},\ldots,s_{p_v},s_{B_1},s_{B_2},\ldots,s_{B_b}\}\\
&=\{s_p : p \in P\} \cup \{s_B : B \in \mathcal{B}\}.
\end{align*}
Each source node $s_{p_i} (\text{or}~s_{B_j})$ observes an independent unit-entropy random process $X_{p_i} (\text{or}~X_{B_j})$, respectively. The set of all source processes (or ``sources'') is $X=\{X_p : p \in P\} \cup \{X_B : B \in \mathcal{B}\}$.
\item \textit{Terminal node set}: $T$ also contains $v+b$ elements, in the same manner as $S$, i.e.,
\begin{align*}
T=\{t_p : p \in P\}\cup \{t_B : B \in \mathcal{B}\}.
\end{align*}
\item \textit{Intermediate nodes}: In addition to the above there are $2v$ vertices which are elements of $M^H \cup M^T$ (superscripts $H$ and $T$ denote head and tail, respectively) where
\begin{align*}
M^H=\{m^h_1, m^h_2, \ldots, m^h_v\}~\text{and} ~M^T=\{m^t_1,m^t_2, \ldots , m^t_v\}.
\end{align*}
\end{enumerate}
We set $V=S \cup T \cup M^H \cup M^T$ and thus $V$ contains $2(v+b)+2v$ vertices. 
The edge set $E$ of the directed acyclic network $G$ has the following components.
\begin{enumerate}
\item \textit{Bottleneck edges}: We introduce $v$ unit-capacity edges $e_i=(m^t_i,m^h_i),~ i \in \{1,2,\ldots v\}$. Thus the sets $M^H$ and $M^T$ denote the set of heads and tails of the bottleneck edges respectively. We also make the following connections for all $p_i \in P$
\begin{itemize}
\item $(s_{p_i}, m_i^t)$ and $(s_{B_j}, m_i^t)$ for all $B_j \in \langle p_i \rangle$,
\item $(m^h_i, t_{p_i})$ and $(m^h_i,t_{B_j})$ for all $B_j \in \langle p_i \rangle$.
\end{itemize}
Thus, each $m_i^t$ has $r+1$ incoming edges and each $m_i^h$ has $r+1$ outgoing edges. Denoting the set of edges introduced in this step as $M$, we get $|M|=v + 2v(r+1)$.
\item \textit{Direct edges}: The remaining edges are those that are not incident on either the tail or the head of the bottleneck edges and are referred to as the direct-edge set $D$.
It consists of the following unit-capacity edges for every $p_i \in P$ and $B_j \in \mathcal{B}$,
\begin{itemize}
\item $(s_{p_l},t_{p_i})$ for all $p_l \neq p_i$,
\item $(s_{B_l},t_{p_i})$ for all $B_l \notin \langle p_i \rangle$,
\item $(s_{p_l},t_{B_j})$ for all $p_l \notin B_j$, and
\item $(s_{B_l},t_{B_j})$ for all $B_l \notin \langle B_j \rangle$.
\end{itemize}
\end{enumerate}
We then set $E=M \cup D$. This completes the construction of the network $G$.
It can be verified now that each terminal is connected to every source node in the network by at least one path. To denote a general element of the set $E$ we will use the letter $e$.
Given a set of local encoding functions for each edge, denoted $\tilde{\phi}_e$, we can define the corresponding global encoding function for edge $e$, denoted $\phi_e$. In what follows, we will use the notation $\phi_e(X)$ to denote the $n$-length vector that is transmitted on edge $e$. For convenience, we define
\begin{align*}
\phi_{\text{In}\left(s_{p_i}\right)}(X) &= \begin{bmatrix}X_{p_i}^T & \mathbf{0}_{1 \times (n-m)}\end{bmatrix}^T, \text{~for all~}  p_i \in P\\
\phi_{\text{In}\left(s_{B_j}\right)}(X) &=\begin{bmatrix}X_{B_j}^T & \mathbf{0}_{1 \times (n-m)}\end{bmatrix}^T, \text{~for all~}  B_j \in \mathcal{B}\\
\phi_{\text{In}(v)}(X) &= \{\phi_e(X) : e \in \text{In}(v)\}, \text{~for all~} v \in V \setminus S\\
\phi_{\text{In}(e_i)}(X) &= \{\phi_e(X) : e \in \text{In}(m_i^t)\}.
\end{align*}
As will be apparent for our networks, nontrivial encoding functions will only be required on the bottleneck edges, $e_i, i = 1,\dots,v$. For brevity, we set $\phi_i(X) = \phi_{e_i}(X)$.

\begin{example}
\label{ex:fanograph}
Using the BIBD described in example \ref{ex:fano}, we construct a sum-network here. It has
\begin{figure}
\centering
\includegraphics[scale=0.6]{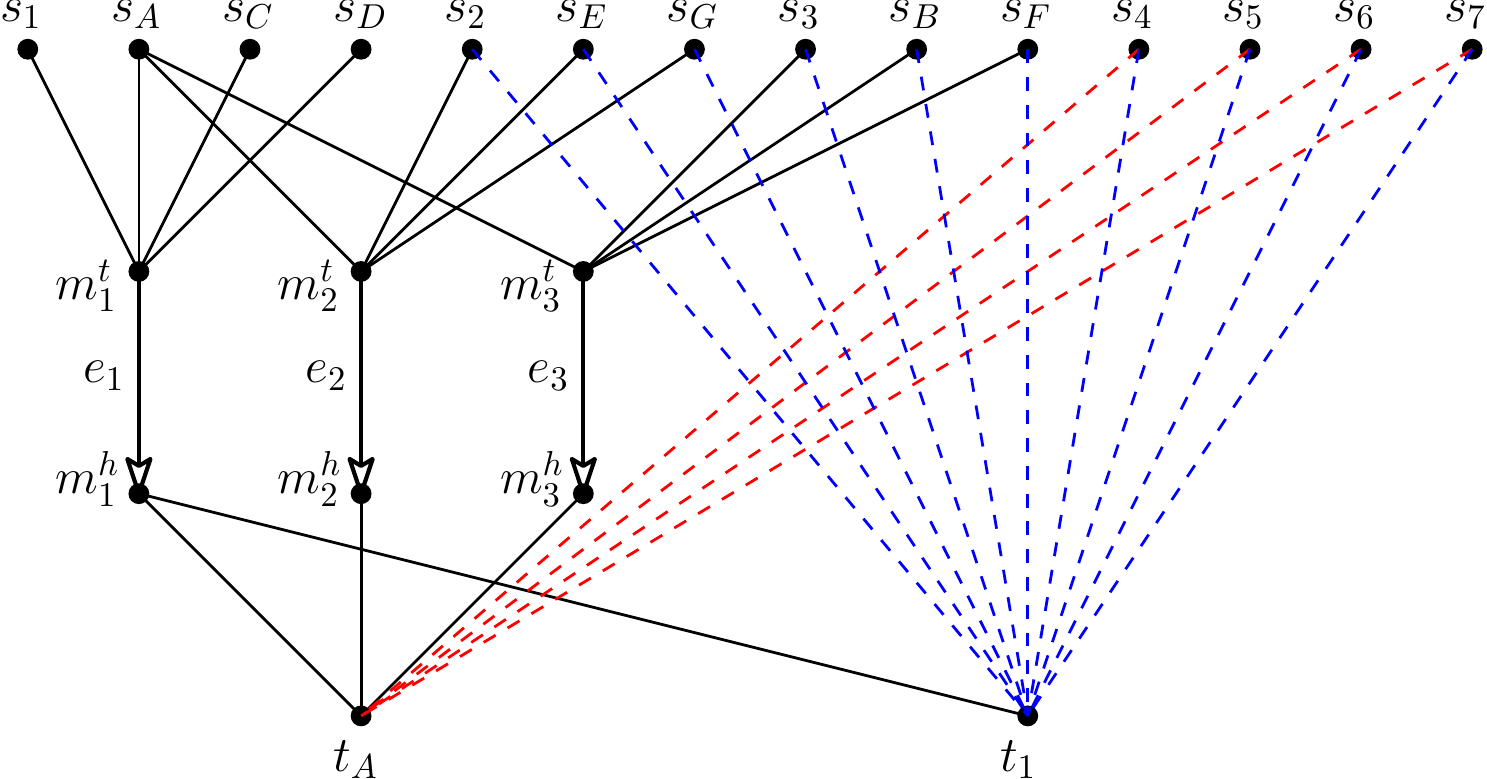}
\caption{Figure showing a part of the sum-network constructed using Fano plane. The $14$ source nodes are shown at the top of the figure and $2$ out of the $14$ terminal nodes are depicted at the bottom. All edges have unit capacity and point downward. The $3$ edges with the arrowheads ($e_1,e_2,e_3$) are the `bottleneck edges'. The red dashed lines are the `direct edges' for terminal node $t_A$ and the blue dashed lines are the `direct edges' for terminal $t_1$.}
\label{fig:sumnw}
\end{figure}
\begin{itemize}
\item fourteen source nodes $\{s_1,\ldots,s_7,s_A,\ldots,s_G\}$,
\item fourteen terminal nodes $\{t_1,\ldots,t_7,t_A,\ldots,t_G\}$,
\item fourteen intermediate nodes $\{m_1^h,\ldots,m_7^h,m_1^t,\ldots,m_7^t\}$, and
\item seven \hspace{-1.5mm} bottleneck \hspace{-1.5mm} edges \hspace{-1.5mm} $\{e_1\hspace{-0.5mm}=\hspace{-0.5mm}(m_1^t,m_1^h), \hdots, e_7\}$.
\end{itemize}
There will also be direct edges and edges connecting the bottlenecks to the sources and terminals.
Part of the constructed sum-network is shown in figure \ref{fig:sumnw}. We have shown the construction procedure on only three bottleneck edges, namely $e_1,e_2,e_3$ and two terminals $t_1, t_A$. The edges coming into and going out of other bottlenecks and terminals can be constructed similarly.
\end{example}

\section{Upper bound on coding capacity of constructed sum-network}
\label{sec:upbound}
Let $H(Y)$ be the entropy function for a random variable $Y$. We let $H(Y_1,Y_2,\ldots, Y_l)=H(\{Y_1,Y_2,\ldots,Y_l\})=H(\{Y_i\}_{1}^{l})$ for any $l > 1$. 
Suppose that there exists a $(m,n)$ fractional network code assignment, i.e. $\tilde{\phi}_e$ for $e \in E$ (and corresponding global encoding functions $\phi_e(X)$) and decoding functions $\psi_t$ for $t \in T$ so that all the terminals in $T$ can recover the sum of sources denoted by $Z=\sum_{p \in P} X_p+\sum_{ B \in \mathcal{B}} X_B$.
The following two lemmas demonstrate that certain partial sums can be computed by observing subsets of the bottleneck edges (see Appendix \ref{app:lemma1}, \ref{app:lemma2} for proofs).
\begin{lemma}
\label{lemma:ub1}
The value $X'_i=X_{p_i} + \sum_{ B \in \langle p_i \rangle} X_B$ can be obtained from $\phi_{i}(X)$ for all $i = 1, \dots, v$.
\end{lemma}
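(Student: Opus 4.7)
My plan is to track exactly which sources influence $\phi_i(X)$, which sources are observed directly at terminal $t_{p_i}$, and then combine the decoding requirement with the mutual independence of the source processes to isolate $X'_i$ as a function of $\phi_i(X)$ alone.

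First I would observe that the only edges into $m_i^t$ originate at $s_{p_i}$ and at $\{s_B : B \in \langle p_i\rangle\}$, so the global encoding function $\phi_i(X)$ is a deterministic function of the random vector $U_i := (X_{p_i}, \{X_B\}_{B \in \langle p_i\rangle})$; write $\phi_i(X) = f_i(U_i)$. The target $X'_i = X_{p_i} + \sum_{B \in \langle p_i\rangle} X_B = g_i(U_i)$ is likewise a function of $U_i$.

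Next I would examine terminal $t_{p_i}$. By the construction in section \ref{sec:construction}, its only non-direct incoming edge is the one from $m_i^h$, and its direct incoming edges come from every source \emph{outside} $\{s_{p_i}\}\cup\{s_B : B \in \langle p_i\rangle\}$. Let $V_i$ denote the tuple of these outside sources and set $Y_i := \sum_{l\neq i} X_{p_l} + \sum_{B\notin \langle p_i\rangle} X_B$, so that $Y_i$ is a function of $V_i$ and $Z = X'_i + Y_i$. Since $t_{p_i}$ decodes $Z$ from $(\phi_i(X), V_i)$, the quantity $X'_i = Z - Y_i$ is a function of $(\phi_i(X), V_i)$; in entropic form, $H(X'_i \mid \phi_i(X), V_i) = 0$.

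The last step, which I view as the only substantive one, is to remove the dependence on $V_i$. Since the source processes are mutually independent, $U_i$ and $V_i$ are independent, and hence the pair $(\phi_i(X), X'_i) = (f_i(U_i), g_i(U_i))$ is independent of $V_i$. This implies $I(X'_i; V_i \mid \phi_i(X)) = 0$, and combining with the preceding equation gives
\begin{equation*}
H(X'_i \mid \phi_i(X)) = I(X'_i; V_i \mid \phi_i(X)) + H(X'_i \mid \phi_i(X), V_i) = 0,
\end{equation*}
so $X'_i$ is indeed a function of $\phi_i(X)$. The main obstacle is precisely this independence argument; the preceding two steps amount to reading off the construction.
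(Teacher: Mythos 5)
Your proposal is correct and follows essentially the same route as the paper's proof: use the decoding requirement at $t_{p_i}$ together with the fact that the direct edges supply all sources outside $\{X_{p_i}\}\cup\{X_B : B\in\langle p_i\rangle\}$ to get $H(X'_i \mid \phi_i(X), V_i)=0$, then drop the conditioning on $V_i$ by independence. The only difference is that you spell out the conditional-independence step (via $(\phi_i(X),X'_i)$ being a function of $U_i$, which is independent of $V_i$), which the paper leaves as ``it can be verified.''
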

\begin{lemma}
\label{lemma:ub2}
The value of $\sum_{ p \in B_j} X_p  + \sum_{B \in \langle B_j \rangle} X_B$ can be computed from the set of values $\{\phi_{i}(X) : p_i \in B_j\}$ for all $B_j \in \mathcal{B}$.
\end{lemma}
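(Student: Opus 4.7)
The plan is to exploit the decoding requirement at terminal $t_{B_j}$. First I would enumerate its incoming edges: by the construction, $t_{B_j}$ has direct edges from $s_{p_l}$ for every $p_l \notin B_j$ and from $s_{B_l}$ for every $B_l \notin \langle B_j \rangle$, and has an edge from $m_i^h$ exactly when $B_j \in \langle p_i \rangle$, i.e., when $p_i \in B_j$. Because $m_i^h$ has only $e_i$ as an incoming edge, each arriving bottleneck symbol is a deterministic function of $\phi_i(X)$, and because each direct edge originates at a single source, the symbol it carries is a deterministic function of that source alone.

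Composing the decoder $\psi_{t_{B_j}}$ with these edge encoders produces a function $\tilde{\psi}$ satisfying
\beqno
\tilde{\psi}\bigl(\{X_{p_l}\}_{p_l \notin B_j},\ \{X_{B_l}\}_{B_l \notin \langle B_j \rangle},\ \{\phi_i(X)\}_{p_i \in B_j}\bigr) = \sum_{p \in P} X_p + \sum_{B \in \mathcal{B}} X_B
\eeqno
as a functional identity in all source symbols. I would then specialize this identity by setting $X_{p_l} = 0$ for every $p_l \notin B_j$ and $X_{B_l} = 0$ for every $B_l \notin \langle B_j \rangle$. The right-hand side collapses to $\sum_{p \in B_j} X_p + \sum_{B \in \langle B_j \rangle} X_B$, while the left-hand side becomes a function of $\{\phi_i(X) : p_i \in B_j\}$ alone, which is precisely the decoder the lemma asks for.

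A small consistency check worth including: after the substitution, each $\phi_i(X)$ with $p_i \in B_j$ still depends only on the surviving (nonzero) sources, because $\phi_i(X)$ is a function of $X_{p_i}$ together with $\{X_{B_l} : B_l \in \langle p_i \rangle\}$, and the condition $p_i \in B_j$ forces $\langle p_i \rangle \subseteq \langle B_j \rangle$, so none of these sources were zeroed. The argument is structurally the same as the one used for Lemma \ref{lemma:ub1}, but applied at a block-type terminal rather than a point-type terminal; the only real work is bookkeeping which direct edges and which bottleneck heads reach $t_{B_j}$, and this is read off directly from the incidence conditions built into the construction. I expect no technical obstacle beyond making sure the decoding identity is invoked as an identity in the source symbols (so that substituting specific values is legal) rather than merely as an almost-sure equality.
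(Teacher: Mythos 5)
Your proposal is correct, and it reaches the lemma by a genuinely different mechanism than the paper. The paper stays inside the entropy framework: writing $Z=Z_1+Z_2+Z_3+Z_4$ with $Z_2=\sum_{p\notin B_j}X_p$ and $Z_4=\sum_{B\notin\langle B_j\rangle}X_B$ determined by the direct-edge inputs to $t_{B_j}$, it deduces $H\bigl(Z_1+Z_3\mid \phi_{\text{In}(t_{B_j})}(X)\bigr)=0$ and then removes the direct-edge sources from the conditioning by arguing that, the sources being i.i.d., $Z_1+Z_3$ is conditionally independent of $\{X_p:p\notin B_j\}$ and $\{X_B:B\notin\langle B_j\rangle\}$ given $\{\phi_i(X):p_i\in B_j\}$. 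You instead treat the decoding requirement as a functional identity in the source symbols and build the desired decoder explicitly by substituting zero for the nuisance sources; the observation that makes this substitution harmless --- each $\phi_i(X)$ with $p_i\in B_j$ depends only on $X_{p_i}$ and $\{X_B:B\in\langle p_i\rangle\}$, and $\langle p_i\rangle\subseteq\langle B_j\rangle$ since $\langle B_j\rangle=\cup_{p\in B_j}\langle p\rangle$ --- is precisely the structural fact that justifies the paper's conditional-independence step, so the two proofs rest on the same bookkeeping about $\text{In}(t_{B_j})$. What your route buys is an elementary, constructive decoder with no appeal to conditional independence; the point you flag about identity versus almost-sure equality is indeed settled by the full support of the uniform distribution on $\mathcal{F}^m$. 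What the paper's route buys is that its conclusion is already packaged as an $H(\cdot\mid\cdot)=0$ statement, matching the form in which the lemmas are consumed in the upper-bound arguments of Section \ref{sec:upbound}.
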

\begin{theorem}
For a $2-(v,k,1)$ design $\mathcal{D}$, the coding capacity of the sum-network obtained using the given construction is at most $1$.
\label{thm:ub1}
\end{theorem}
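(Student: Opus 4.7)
My approach is a single cut-set/entropy bound applied at one carefully chosen terminal, and it leverages Lemma~\ref{lemma:ub1} essentially as a black box. Fix any achievable $(m,n)$ fractional solution, pick an arbitrary point $p_i \in P$, and focus on the terminal $t_{p_i}$. By Lemma~\ref{lemma:ub1}, the partial sum
\begin{equation*}
X'_i = X_{p_i} + \sum_{B \in \langle p_i \rangle} X_B
\end{equation*}
is a deterministic function of $\phi_i(X)$, the $n$-symbol vector transmitted on the bottleneck edge $e_i$. Inspecting the construction confirms why this is the natural edge to cut at $t_{p_i}$: the direct edges into $t_{p_i}$ originate only at sources in $\{s_{p_l}: l\neq i\} \cup \{s_B : B \notin \langle p_i\rangle\}$, so $\phi_i(X)$ is the sole conduit carrying any information about the $r+1$ sources $X_{p_i}$ and $\{X_B : B \in \langle p_i\rangle\}$ to $t_{p_i}$.

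Next I would close with an entropy count on $X'_i$. It is the sum of $r+1$ mutually independent sources, each uniformly distributed on $\mathcal{F}^m$; the sum of independent random variables on the abelian group $\mathcal{F}^m$ is itself uniform as soon as a single summand (say $X_{p_i}$) is uniform, so $H(X'_i) = m \log q$. Meanwhile $\phi_i(X) \in \mathcal{F}^n$ gives $H(\phi_i(X)) \le n \log q$, and since $X'_i$ is a function of $\phi_i(X)$ monotonicity of entropy yields
\begin{equation*}
m \log q \;=\; H(X'_i) \;\le\; H(\phi_i(X)) \;\le\; n \log q,
\end{equation*}
hence $m/n \le 1$. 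Because this inequality is valid for every achievable $(m,n)$ pair, the coding capacity of the constructed sum-network is at most $1$.

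The only conceptual obstacle is really a bookkeeping one: making sure the entropy count is sharp, i.e.\ that $X'_i$ achieves the full entropy $m\log q$ on $\mathcal{F}^m$. This is immediate because $X_{p_i}$ is itself uniform and independent of the remaining $r$ summands, so the argument is insensitive to $ch(\mathcal{F})$ and to the specific $(v,k,1)$-BIBD parameters beyond the topological fact that $s_{p_i}$ and every $s_B$ with $B\in\langle p_i\rangle$ are separated from $t_{p_i}$ by the single bottleneck edge $e_i$. Lemma~\ref{lemma:ub2} is not needed for this upper bound; it will presumably re-enter when the matching achievability code is constructed in Section~\ref{sec:lincode}.
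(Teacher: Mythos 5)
Your proof is correct, and it is genuinely leaner than the paper's. The paper also starts from Lemma~\ref{lemma:ub1}, but then bounds the \emph{joint} entropy $H(\{\phi_i(X)\}_1^v)$ of all $v$ bottleneck transmissions: it expands this via mutual information with $\{X'_i\}_1^v$, invokes a separate appendix claim that the $X'_1,\dots,X'_v$ are jointly i.i.d.\ uniform over $\mathcal{F}^m$ (so that $H(\{X'_i\}_1^v)=vm\log_2 q$), and uses $H(X'_i\mid\phi_i(X))=0$ to kill the conditional term, arriving at $vm\log_2 q \le vn\log_2 q$. You instead work at a single bottleneck: $X'_i$ is a function of $\phi_i(X)$, so $H(X'_i)\le H(\phi_i(X))\le n\log_2 q$, and $H(X'_i)=m\log_2 q$ because adding the independent uniform $X_{p_i}$ to the remaining summands makes $X'_i$ uniform on $\mathcal{F}^m$ --- only marginal uniformity is needed, so you avoid the appendix's joint-independence argument entirely. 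Both routes yield exactly $m/n\le 1$; what your version buys is economy (one edge, one marginal distribution, no mutual-information bookkeeping), while the paper's aggregate formulation mirrors the multi-edge reasoning it deploys again in Theorem~\ref{thm:ubodd}, where a cut over all $v$ bottlenecks really is needed. Your side remark that $e_i$ is the only path from $s_{p_i}$ and $\{s_B: B\in\langle p_i\rangle\}$ to $t_{p_i}$ is accurate and is precisely the intuition behind Lemma~\ref{lemma:ub1}, which you correctly treat as a black box.
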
 
\begin{proof}
Let $X'_i=X_{p_i}+\sum_{ B \in \langle p_i \rangle} X_B$ and $\phi_{i}(X)$ is the function transmitted on the $i$-th bottleneck. Under a $(m,n)$ fractional network code, we have that
$H(X'_i)=m \log_2q ~\text{bits} ~\text{and}~ H(\phi_{i}(X))\leq n \log_2q ~\text{bits}$.
We then have
\begin{align}
&H(\{\phi_{i}(X)\}_{1}^{v})  \leq \sum_{i=1}^{v}H(\phi_{i}(X)) \leq vn \log_2q ~\text{bits, and} \nonumber \\
&H(\{\phi_{i}(X)\}_{1}^{v})=I\left(\{\phi_{i}(X)\}_{1}^{v} ; \{X'_i\}_{1}^{v}\right)\hspace{-0.5mm}+\hspace{-0.5mm}H\left(\{\phi_{i}(X)\}_{1}^{v}|\{X'_i\}_{1}^{v}\right) \nonumber \\
&\hspace{-0.5mm}=\hspace{-0.5mm}H\left(\{X'_i\}_{1}^{v}\right)\hspace{-0.5mm}-\hspace{-0.5mm}H\left(\{X'_i\}_{1}^{v}|\{\phi_{i}(X)\}_{1}^{v}\right)\hspace{-0.5mm}+\hspace{-0.5mm}H\left(\{\phi_{i}(X)\}_{1}^{v}|\{X'_i\}_{1}^{v}\right). \label{eq:th2}
\end{align}
We assume that all source random variables are independent and uniformly distributed over $\mathcal{F}^m$. That implies $X'_i$ are also uniform i.i.d. over $\mathcal{F}^m$ for all $i$ (see Appendix \ref{app:claim}).
Thus, we have that $H\left(\{X'_i\}_{1}^{v}\right)=vm\log_2q$ bits.
We expand the second term of \eqref{eq:th2} as
\vspace{-2mm}\begin{align*}
H\left(\{X'_i\}_{1}^{v}|\{\phi_{i}(X)\}_{1}^{v}\right)=\sum_{i=1}^{v}H\left( X'_i | \{X'_j\}_1^{i-1},\{\phi_{j}(X)\}_1^{v}\right).
\end{align*}
Also, by lemma \ref{lemma:ub1} we have that for all $i=1,2,\ldots,v$
\begin{align*}
H(X'_i | \phi_{i}(X))=0 \implies H\left( X'_i | \{X'_j\}_{1}^{i-1},\{\phi_{j}(X)\}_1^v\right)=0.
\end{align*}
Using these in \eqref{eq:th2} we get that
\vspace{-1mm}\begin{align*}
& vm\log_2q + H\left(\{\phi_{i}(X)\}_{1}^{v}|\{X'_i\}_{1}^{v}\right) \leq vn\log_2q \\
\text{i.e.,}~ & vm \log_2q \leq vn\log_2q \implies \frac{m}{n} \leq 1.
\end{align*}
\end{proof}
\begin{theorem}
For a $2-(v,k,1)$ design $\mathcal{D}$, the coding capacity of the sum-network obtained using the above construction is at most $\frac{k(k-1)}{k(k-1)+v-1}$ if $ch(\mathcal{F}) \nmid (k-1)$, i.e., $ch(\mathcal{F})$ is not a divisor of $(k-1)$.
\label{thm:ubodd}
\end{theorem}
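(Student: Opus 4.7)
The plan is to sharpen the bottleneck argument used in Theorem~\ref{thm:ub1} by showing that the ensemble $\{\phi_{i}(X)\}_{i=1}^{v}$ of values on the bottleneck edges determines \emph{every} individual source random variable, not merely the sums $X'_i$. Once this is established, the desired bound will follow by comparing the entropy budget of the bottlenecks ($v n \log_2 q$ bits) against the entropy of the collection of all source processes ($(v+b)m\log_2 q$ bits), where $b = v(v-1)/(k(k-1))$ is the number of blocks. Substituting this value of $b$ into $m/n \leq v/(v+b)$ algebraically yields the required ratio $k(k-1)/(k(k-1)+v-1)$.

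The core combinatorial step is the following identity. Fix a block $B_j \in \mathcal{B}$ and sum $X'_i = X_{p_i} + \sum_{B \in \langle p_i \rangle} X_B$ over the $k$ points $p_i \in B_j$. The coefficient of $X_B$ in $\sum_{p_i \in B_j} X'_i$ equals $|B_j \cap B|$ when $B \neq B_j$ and equals $k$ when $B = B_j$. Because $\mathcal{D}$ is a Steiner system ($\lambda = 1$ forces any two distinct blocks to intersect in at most one point), we have $|B_j \cap B| = 1$ exactly when $B \in \langle B_j\rangle \setminus\{B_j\}$, and $0$ otherwise. Subtracting the quantity $Y_j := \sum_{p \in B_j} X_p + \sum_{B \in \langle B_j \rangle} X_B$, which is recoverable from $\{\phi_i(X) : p_i \in B_j\}$ by Lemma~\ref{lemma:ub2}, produces $\sum_{p_i \in B_j} X'_i - Y_j = (k-1)\,X_{B_j}$.

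Now the hypothesis $ch(\mathcal{F}) \nmid (k-1)$ makes $(k-1)$ a unit in $\mathcal{F}$, so each $X_{B_j}$ is a function of $\{\phi_i(X)\}_{i=1}^{v}$. Substituting back into $X'_i = X_{p_i} + \sum_{B \in \langle p_i\rangle} X_B$ (whose left side is available from $\phi_i(X)$ by Lemma~\ref{lemma:ub1}) then recovers each $X_{p_i}$ as well. Thus all $v+b$ sources are determined by the bottleneck transmissions, giving $H(\{\phi_i(X)\}_{1}^{v}) \geq (v+b)m\log_2 q$. Combined with the subadditivity bound $H(\{\phi_i(X)\}_{1}^{v}) \leq v n \log_2 q$ already used in Theorem~\ref{thm:ub1}, this yields $m/n \leq v/(v+b) = k(k-1)/(k(k-1)+v-1)$.

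The step I expect to be most delicate is the block-intersection bookkeeping that produces the coefficient $k-1$ on $X_{B_j}$; getting this coefficient exactly right is what makes the divisibility hypothesis on $ch(\mathcal{F})$ cleanly imply invertibility. Everything else reuses Lemmas~\ref{lemma:ub1} and~\ref{lemma:ub2}, the independence/uniformity of the sources (so that $H(\text{all sources}) = (v+b)m\log_2 q$), and the standard entropy inequalities from the proof of Theorem~\ref{thm:ub1}.
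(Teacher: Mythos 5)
Your proposal is correct and follows essentially the same route as the paper: sum $X'_i$ over the points of a block, use $\lambda=1$ to get the coefficient $k$ on $X_{B_j}$ and $1$ on the other incident blocks, subtract the quantity from Lemma~\ref{lemma:ub2} to isolate $(k-1)X_{B_j}$, invert $(k-1)$ using $ch(\mathcal{F})\nmid(k-1)$, and conclude that all $v+b$ sources are functions of the $v$ bottleneck values. The paper phrases the final counting via a hypothetical terminal $t_\star$ and a cut-set bound $q^{(v+b)m}\leq q^{vn}$, while you state it as an entropy inequality, but these are the same argument.
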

\begin{proof}
Let $X'_i=X_{p_i}+\sum_{B \in \langle p_i \rangle} X_B$ and $\phi_{i}(X)$ be the function transmitted on the $i$-th bottleneck.
Consider terminal $t_B$, for any $B \in \mathcal{B}$. Its incoming edges include $\{e_p : p \in B\}$. Then, from lemma \ref{lemma:ub1} it can evaluate
\begin{align}
\sum_{p_i \in B} X'_i=\sum_{p_i \in B}X_{p_i} + \sum_{p_i \in B}\sum_{B' \in \langle p_i \rangle}X_{B'}.
\label{eq:th1a}
\end{align}
Since $\mathcal{D}$ is a 2-design with $\lambda = 1$, for any $p, p' \in B$, we have that $\langle p \rangle \cap \langle p' \rangle=B$. By definition, $B \in \langle p \rangle~ \forall p \in B$ and $|B|=k$. These imply that the RHS in \eqref{eq:th1a} simplifies to
\begin{align}
\sum_{p_i \in B} X_{p_i}+ kX_B + \sum_{ B' \in \langle B \rangle \setminus B} X_{B'}
\label{eq:th1b}
\end{align} where $ \langle B \rangle \setminus B=\{B' \in \langle B \rangle: B' \neq B\}$.
From Lemma \ref{lemma:ub2}, $t_B$ can evaluate
\begin{align*}
\sum_{ p \in B} X_{p}  + \sum_{B' \in \langle B \rangle} X_{B'} = \sum_{ p_i \in B} X_{p_i}  + X_B + \sum_{B' \in \langle B \rangle \setminus B} X_{B'}
\end{align*} from the bottleneck edges.
This implies that terminal $t_B$ can compute the value of $(k-1)X_B$. 
Since $ch(\mathcal{F}) \nmid (k-1)$, $t_B$ can compute the value of $X_B$.

From lemma \ref{lemma:ub1}, $t_{p_i}$ can evaluate $X'_{i}$ for all $p_i \in P$. Consider a hypothetical terminal $t_\star$ which can observe the functions $\phi_{i}(X)$ for all $p_i \in P$. Since $t_B$ can find the value of $X_B$ from the bottleneck edges incident to it, so can $t_\star$. Similarly, $t_\star$ can find $X'_{i}$ for all $p_i \in P$ (since $t_{p_i}$ can find it). Then $t_\star$ is able to find $X_p$ and $X_B$ for any $p$ and $B$. 
Using a cut-based argument for $t_\star$, we get that
\begin{align*}
q^{(v+b)m} \leq q^{vn} \implies \frac{m}{n} \leq \frac{v}{v+b}.
\end{align*}
For $\mathcal{D}$, $b=\frac{v(v-1)}{k(k-1)}$. Substituting this above gives us the result.
\end{proof}

\vspace{-1mm}\section{Linear network code for constructed sum-network}
\label{sec:lincode}
We now describe $(m,n)$-fractional linear network codes whose rate matches the upper bound described above. 
\vspace{-1mm}\subsection{Case when $ch(\mathcal{F}) \mid (k-1)$}
In this case, we set $m=n=1$. We define the edge functions and terminal functions as follows ($i=1\hdots v; j=1 \hdots b$)
\begin{align*}
\phi_{i}(X) = X'_i = X_{p_i}+\sum_{B \in \langle p_i \rangle} X_B,
\end{align*} \vspace{-3mm}
\begin{align}
\label{eq:even1}
\psi\left(\phi_{\text{In}\left(t_{p_i}\right)}(X)\right) = \phi_{i}(X) + \sum_{ p \neq p_i} X_p + \sum_{ B \notin \langle p_i \rangle} X_B,\end{align}
\begin{align}
\label{eq:even2}
\psi\left(\phi_{\text{In}\left(t_{B_j}\right)}(X)\right) =\hspace{-1mm} \sum_{p_l \in B_j} \phi_{l}(X) +\hspace{-1mm} \sum_{p \notin B_j} X_p +\hspace{-2mm} \sum_{B \notin \langle B_j \rangle} X_B.
\end{align}
\vspace{-1mm}

All other edges $e$ in $G$ are such that $|\text{In}(e)|=1$ and we set $\phi_e(X)=\phi_{\text{In}(e)}(X)$. The following is proved in \ref{app:lemma3}.
\begin{lemma}
The network code described above enables all the terminals to compute the sum of sources $Z=\sum_{ p \in P} X_p + \sum_{B \in \mathcal{B}} X_B$.
\label{lemma:lincode1}
\end{lemma}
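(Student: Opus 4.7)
The plan is to verify correctness for each of the two kinds of terminals separately, reducing each case to a substitution of the edge functions into the decoding rule.

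For a point-terminal $t_{p_i}$, the argument is essentially immediate. First I would plug $\phi_i(X) = X_{p_i} + \sum_{B \in \langle p_i\rangle} X_B$ into the decoding rule (\ref{eq:even1}) and observe that the three summands partition the index sets $P$ and $\mathcal{B}$ exactly: $\{p_i\}$ together with $\{p \neq p_i\}$ covers $P$ once, and $\langle p_i\rangle$ together with its complement covers $\mathcal{B}$ once. So the output is $\sum_p X_p + \sum_B X_B = Z$, independent of the field characteristic. This case does not use the hypothesis $ch(\mathcal{F}) \mid (k-1)$.

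The real work is the block-terminal case $t_{B_j}$, and this is where I expect the main obstacle to lie. Substituting the edge functions into (\ref{eq:even2}) gives
\begin{align*}
\sum_{p_l \in B_j}\phi_l(X) = \sum_{p_l \in B_j} X_{p_l} + \sum_{p_l \in B_j}\sum_{B' \in \langle p_l\rangle} X_{B'},
\end{align*}
so I need to count, for each block $B'$, the multiplicity $|\{p_l \in B_j : B' \in \langle p_l\rangle\}| = |B_j \cap B'|$. Here I would invoke the $\lambda=1$ property: if $B'=B_j$ the multiplicity is $k$; if $B'\neq B_j$ but $B' \in \langle B_j\rangle$, then $B_j$ and $B'$ share at least one point, and they cannot share two, since a pair of points lies in a unique block, so the multiplicity is exactly $1$; and if $B'\notin\langle B_j\rangle$ the multiplicity is $0$. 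Therefore
\begin{align*}
\sum_{p_l \in B_j}\phi_l(X) = \sum_{p_l \in B_j} X_{p_l} + kX_{B_j} + \sum_{B' \in \langle B_j\rangle \setminus \{B_j\}} X_{B'}.
\end{align*}

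Adding the remaining direct-edge contributions $\sum_{p \notin B_j} X_p + \sum_{B \notin \langle B_j\rangle} X_B$ from (\ref{eq:even2}), the point terms collapse to $\sum_p X_p$, and the block terms become $(k-1)X_{B_j} + \sum_B X_B$. Finally I would invoke the hypothesis $ch(\mathcal{F}) \mid (k-1)$ to kill the $(k-1)X_{B_j}$ term, yielding exactly $Z$. The combinatorial counting step (handling the three regimes for $|B_j \cap B'|$ and identifying $\langle B_j\rangle$ correctly) is the non-trivial part; everything else is bookkeeping, and the characteristic hypothesis enters only at the very last line to eliminate the single residual term.
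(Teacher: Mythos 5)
Your proposal is correct and follows essentially the same route as the paper's proof: the point-terminal case is the same direct substitution, and your multiplicity count $|B_j \cap B'| \in \{0,1,k\}$ is just an explicit restatement of the paper's use of the $\lambda=1$ property ($\langle p\rangle \cap \langle p'\rangle = B_j$ for distinct $p,p' \in B_j$) to obtain $\sum_{p_l \in B_j}\phi_l(X) = \sum_{p \in B_j}X_p + kX_{B_j} + \sum_{B' \in \langle B_j\rangle \setminus B_j}X_{B'}$, after which the hypothesis $ch(\mathcal{F}) \mid (k-1)$ eliminates the residual term exactly as in the paper.
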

\subsection{Case when $ch(\mathcal{F}) \nmid (k-1)$}
Let $v'=v-x, ~\text{where}~ v \equiv x \Mod k, ~\text{and}~b'=rv'/k.$ We describe a $(m,n)$ fractional linear network code with $m=v', n=v'+b'=v'+v'(v-1)/k(k-1)$. Note that the ratio $m/n$ is then equal to the upper bound described in theorem \ref{thm:ubodd}.
In order to specify the linear network code for this case, we first ``color'' the incidence matrix $A$ of the underlying BIBD $\mathcal{D}$ to obtain a matrix $A_c$ of the same size as $A$. 
The coloring procedure assigns a successive natural number to each non-zero element present in every column of $A$. The numbers in each column start from $1$ and go up to $k$.

\begin{example}
For an adjacency matrix as defined in \eqref{eq:A}, the coloring procedure returns
\begin{align}
A_c=
\begin{bmatrix}
1 & 0 & 1 & 1 & 0 & 0 & 0\\2 & 0 & 0 & 0 & 1 & 0 & 1\\3 & 1 & 0 & 0 & 0 & 1 & 0\\0 & 2 & 0 & 2 & 0 & 0 & 2\\0 & 3 & 2 & 0 & 2 & 0 & 0\\0 & 0 & 3 & 0 & 0 & 2 & 3\\0 & 0 & 0 & 3 & 3 & 3 & 0
\end{bmatrix}.
\label{eq:Acolor}
\end{align}
\end{example}
We arrange the $r$ blocks incident to each $p_i$ in increasing order of their block indices and denote them as $(B'_{i1}, B'_{i2}, \ldots, B'_{ir})$. Note that any $B'_{\alpha \beta}$ corresponds to a unique $B_{\gamma} \in \mathcal{B}$ where
\begin{align}
\gamma=\min_{t \in \{1,2,\ldots,b\}} t ~~\text{such that}~\sum_{l=1}^{t}A(\alpha,l)=\beta.
\end{align}

We can then define a \textit{selector function} $U_{\alpha\beta}$ for any $\alpha \in \{1,2,\ldots,v\}$ and $\beta \in \{1,2,\ldots,r\}$ as follows.
\begin{align*}
U_{\alpha \beta}&=\begin{bmatrix} \mathbf{0}_{\frac{v'}{k} \times \frac{v'}{k}(d-1)} & I_{\frac{v'}{k}} & \mathbf{0}_{\frac{v'}{k} \times \frac{v'}{k}(k-d)}\end{bmatrix}X_{B'_{\alpha \beta}}, ~\text{where}\\B'_{\alpha \beta}&=B_\gamma ~\text{and}~ d=A_c(\alpha,\gamma), ~~\gamma \in \{1,2,\ldots,b\}.
\end{align*} Here $I_a$ denotes the identity matrix of dimension $a \times a$ and $\mathbf{0}_{a \times b}$ denotes a zero matrix of dimension $a \times b$.

We now define the edge functions for the linear network code. We let
\begin{align}
\phi_i(X)=\begin{bmatrix}X_i^{'T} & U^T_{i1} & U^T_{i2} & \ldots & U^T_{ir}\end{bmatrix}^T
\end{align}
where $X'_i=X_{p_i}+\sum_{B \in \langle p_i \rangle} X_B$. 
All other edges $e$ in $G$ are such that $|\text{In}(e)|=1$ and we set $\phi_e(X)=\phi_{\text{In}(e)}(X)$.
By definition it is clear that $\phi_{i}(X)$ is a function of $\phi_{\text{In}(e_i)}(X)$. Furthermore, it is easy to see that $\phi_{i}(X)$ is of dimension $n \times 1$. This is because $X'_i$ has dimension $v' \times 1$, and $U_{ij}$ for all $j$ is of dimension $\frac{v'}{k} \times 1$, so that $X'_i$ has dimension $v'+r\left( \frac{v'}{k}\right)= v' + b' = n$. 

The terminal functions used in this network code are defined as follows. For any $p_i \in P$, let
\begin{align*}
Z_1=\sum_{p \neq p_i}X_p,~  Z_2=\sum_{B \notin \langle p_i \rangle}X_B
\end{align*}
such that $t_{p_i}$ can evaluate $Z_1$ and $Z_2$ from its direct edges.
We define the terminal functions for $t_{p_i}, p_i \in P$ as
\begin{align*}
\psi\left(\phi_{\text{In}\left(t_{p_i}\right)}(X)\right)&=\begin{bmatrix} I_{v'} &\mathbf{0}_{v' \times b'}\end{bmatrix} \phi_{i}(X)+Z_1 +Z_2\\&=X'_i+Z_1 +Z_2 =Z.
\end{align*}
\vspace{-2mm}
For terminals of the form $t_{B_j}, B_j \in \mathcal{B}$, we first let
\begin{align*}
Z_1=\sum_{p \notin B_j}X_p, ~ Z_2=\sum_{B \notin \langle B_j \rangle}X_B
\end{align*} such that both $Z_1$ and $Z_2$ are available to $t_{B_j}$ via direct edges.
The terminal functions for $t_{B_j}, B_j \in \mathcal{B}$ are evaluated as
\begin{align}
\psi\left(\phi_{\text{In}\left(t_{B_j}\right)}(X)\right)=&\hspace{-0.5mm}\sum_{p_i \in B_j}\begin{bmatrix} I_{v'} & \mathbf{0}_{v' \times b'}\end{bmatrix} \phi_{i}(X)+Z_1+Z_2 \nonumber \\
&-(k-1)\hspace{-0.5mm}\begin{bmatrix} U^T_{\alpha_1\beta_1} & U^T_{\alpha_2\beta_2} & \hspace{-1mm}\ldots\hspace{-1mm} & U^T_{\alpha_k\beta_k}\end{bmatrix}^T
\label{eq:psi}
\end{align}
where $U_{\alpha\beta}$ is the selector function and 
\begin{align*}
A_c(\alpha_u,j)=u,~ B'_{\alpha_u\beta_u}=B_j ~\text{for all}~ u \in \{1,2,\ldots,k\}.
\end{align*}
From definition it can be seen that all the elements in the RHS of \eqref{eq:psi} are contained in the set In($t_{B_j}$). In addition we have the following lemma.
\begin{lemma} \label{lemma:Xb}
For all $B_j \in \mathcal{B}$, we have that
\begin{align*}
X_{B_j}=\begin{bmatrix} U^T_{\alpha_1\beta_1} & U^T_{\alpha_2\beta_2} & \ldots & U^T_{\alpha_k\beta_k}\end{bmatrix}^T
\end{align*}
\end{lemma}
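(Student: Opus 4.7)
The plan is to unwind the definitions of the selector functions $U_{\alpha_u\beta_u}$ at the indices specified by the lemma and observe that they partition the coordinate vector $X_{B_j}$ cleanly into $k$ equal blocks.

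First I would recall that, by construction, $X_{B_j} \in \mathcal{F}^{v'}$, and since $k \mid v'$ (by the choice $v' = v - x$ with $v \equiv x \pmod k$), we may view $X_{B_j}$ as a vertical concatenation of $k$ sub-vectors, each of dimension $(v'/k) \times 1$. Concretely, for any $d \in \{1, \dots, k\}$, the projector $\begin{bmatrix} \mathbf{0}_{\frac{v'}{k} \times \frac{v'}{k}(d-1)} & I_{\frac{v'}{k}} & \mathbf{0}_{\frac{v'}{k} \times \frac{v'}{k}(k-d)}\end{bmatrix}$ extracts precisely the $d$-th such sub-vector.

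Next I would examine the indices $(\alpha_u, \beta_u)$ appearing in the statement. By their defining condition, $B'_{\alpha_u \beta_u} = B_j$ and $A_c(\alpha_u, j) = u$; in the notation of the selector function this means that when $U_{\alpha_u\beta_u}$ is evaluated, the relevant $B_\gamma$ is $B_j$ itself and the color index is $d = u$. Consequently
\begin{equation*}
U_{\alpha_u \beta_u} \;=\; \begin{bmatrix} \mathbf{0}_{\frac{v'}{k} \times \frac{v'}{k}(u-1)} & I_{\frac{v'}{k}} & \mathbf{0}_{\frac{v'}{k} \times \frac{v'}{k}(k-u)}\end{bmatrix} X_{B_j}
\end{equation*}
is exactly the $u$-th block of $v'/k$ coordinates of $X_{B_j}$.

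Finally I would observe that the coloring procedure assigns the values $1, 2, \dots, k$ bijectively to the $k$ nonzero entries of column $j$ of $A$ (that column has exactly $k$ ones because $|B_j| = k$), so as $u$ ranges over $\{1,\dots,k\}$, the integers $A_c(\alpha_u,j) = u$ hit every color exactly once. Stacking the $k$ blocks $U_{\alpha_1\beta_1}, \ldots, U_{\alpha_k\beta_k}$ therefore yields every sub-vector of $X_{B_j}$ in its correct position, reassembling $X_{B_j}$ completely. No step here is genuinely hard; the only thing to be careful with is verifying that the coloring guarantees the colors $\{A_c(\alpha_u,j)\}_{u=1}^k$ are a permutation of $\{1, \ldots, k\}$, which is built into the definition of $A_c$. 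I would therefore expect the proof to be short, essentially a single display chaining the above identifications.
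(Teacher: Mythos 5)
Your argument is correct and is essentially the paper's own proof: unwinding the selector-function definition at the indices with $B'_{\alpha_u\beta_u}=B_j$ and $A_c(\alpha_u,j)=u$ shows $U_{\alpha_u\beta_u}$ is the $u$-th block of $v'/k$ coordinates of $X_{B_j}$, and stacking the $k$ blocks (the paper writes this as a block-diagonal identity matrix acting on $X_{B_j}$) recovers $X_{B_j}$. Your added remark that the coloring assigns each of $1,\dots,k$ exactly once in column $j$ is just a slightly more explicit justification of the same step.
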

\begin{proof}
It can be seen that $U_{\alpha_u\beta_u}$ consists of components of the source $X_{B_j}$. In addition, we have that
\begin{align*}
\begin{bmatrix} U_{\alpha_1\beta_1} \\ U_{\alpha_2\beta_2} \\ \vdots \\ U_{\alpha_k\beta_k}\end{bmatrix}=
\begin{bmatrix}
I_{\frac{v'}{k}} & \hdots & \mathbf{0} & \mathbf{0}\\
\mathbf{0} & I_{\frac{v'}{k}} &  & \mathbf{0}\\
\vdots &  & \ddots & \vdots\\
\mathbf{0} & \mathbf{0} & \hdots & I_{\frac{v'}{k}}
\end{bmatrix}X_{B_j} = X_{B_j}.
\end{align*}
\end{proof}
The proof of the next lemma appears in \ref{app:theorem3}.
\begin{lemma}
All terminals $t_{B_j}, B_j \in \mathcal{B}$ can evaluate the sum of sources, i.e. $H\left(Z|\psi\left(\phi_{\text{In}\left(t_{B_j}\right)}(X)\right)\right)=0$.
\label{lemma:lincode2}
\end{lemma}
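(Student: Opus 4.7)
The plan is to evaluate the expression on the right-hand side of \eqref{eq:psi} term by term and verify that it simplifies to $Z$. First, I would observe that the matrix $\begin{bmatrix} I_{v'} & \mathbf{0}_{v' \times b'}\end{bmatrix}$ applied to $\phi_i(X)$ simply extracts the $X'_i$ block, so the first sum becomes $\sum_{p_i \in B_j} X'_i$. This is exactly the quantity analyzed in the proof of Theorem \ref{thm:ubodd}: by the $\lambda = 1$ property of the BIBD, pairs of points in $B_j$ share only $B_j$ itself as a common block, so collecting source terms gives
\beqno
\sum_{p_i \in B_j} X'_i = \sum_{p_i \in B_j} X_{p_i} + k X_{B_j} + \sum_{B' \in \langle B_j \rangle \setminus B_j} X_{B'}.
\eeqno
I would cite the derivation leading to \eqref{eq:th1b} rather than redo it.

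Next I would add the direct-edge contributions $Z_1 = \sum_{p \notin B_j} X_p$ and $Z_2 = \sum_{B \notin \langle B_j \rangle} X_B$. The $Z_1$ term completes the sum over points to $\sum_{p \in P} X_p$, while $Z_2$ combines with $k X_{B_j} + \sum_{B' \in \langle B_j \rangle \setminus B_j} X_{B'}$ to give $\sum_{B \in \mathcal{B}} X_B + (k-1) X_{B_j}$. Therefore the first two summands in \eqref{eq:psi} add up to $Z + (k-1) X_{B_j}$.

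It remains to handle the last term. By Lemma \ref{lemma:Xb}, the stacked vector $\begin{bmatrix} U^T_{\alpha_1\beta_1} & \cdots & U^T_{\alpha_k\beta_k}\end{bmatrix}^T$ equals $X_{B_j}$, so its scalar multiple by $(k-1)$ is $(k-1) X_{B_j}$. Subtracting this from $Z + (k-1) X_{B_j}$ yields exactly $Z$, which gives $H\left(Z \mid \psi(\phi_{\text{In}(t_{B_j})}(X))\right) = 0$.

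The main thing to be careful about is checking that every quantity used on the right-hand side of \eqref{eq:psi} is genuinely in $\text{In}(t_{B_j})$: the $\phi_i(X)$ for $p_i \in B_j$ arrive via the bottleneck heads $m_i^h$ (since $B_j \in \langle p_i\rangle$ iff $p_i \in B_j$), the $X_p$ for $p \notin B_j$ and $X_B$ for $B \notin \langle B_j\rangle$ arrive via direct edges by construction, and the $U_{\alpha_u\beta_u}$ are components of the bottleneck symbols $\phi_{\alpha_u}(X)$ with $p_{\alpha_u} \in B_j$ (so these too are available to $t_{B_j}$). Apart from this bookkeeping, the argument is a direct algebraic verification once the combinatorial identity from Theorem \ref{thm:ubodd} and Lemma \ref{lemma:Xb} are in hand; I do not expect any real obstacle beyond carefully invoking these two facts.
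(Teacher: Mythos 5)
Your proposal is correct and follows essentially the same route as the paper's own proof: both extract $\sum_{p_i \in B_j} X'_i$ from the bottleneck symbols, invoke the identity \eqref{eq:th1b} derived from the $\lambda=1$ property in Theorem \ref{thm:ubodd}, and cancel the excess $(k-1)X_{B_j}$ using Lemma \ref{lemma:Xb} together with the direct-edge sums $Z_1$ and $Z_2$. Your extra bookkeeping that every term on the right-hand side of \eqref{eq:psi} is available at $t_{B_j}$ matches the remark the paper makes just before stating the lemma.
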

\begin{example}
We describe the linear network code for the sum-network described in example \ref{ex:fanograph}. Here $k=3$ and we describe a linear network code for the case when $ch(\mathcal{F}) \nmid (k-1)=2$, i.e., $\mathcal{F}$ has odd characteristic. Then $v'=b'=6$. The linear network code has rate $m/n=6/12=1/2$. To describe the network code we first obtain a coloring of the adjacency matrix which is described in \eqref{eq:Acolor}. 

Suppose  $X$ is a $6 \times 1$ vector. Let
\begin{align*} X(a:b)^T=\begin{bmatrix}\mathbf{0}_{(b-a+1) \times (a-1)} & I_{(b-a+1)} & \mathbf{0}_{(b-a+1) \times (6-b)}\end{bmatrix}X. \end{align*}  Then the edge vectors are as follows:
\begin{align*}
\phi_{1}(X)&=\begin{bmatrix} X^{'T}_1 & X_A(1:2) & X_C(1:2) & X_D(1:2)\end{bmatrix}^T,\\
\phi_{2}(X)&=\begin{bmatrix} X^{'T}_2 & X_A(3:4) & X_E(1:2) & X_G(1:2)\end{bmatrix}^T,\\
\phi_{3}(X)&=\begin{bmatrix} X^{'T}_3 & X_A(5:6) & X_B(1:2) & X_F(1:2)\end{bmatrix}^T,\\
\phi_{4}(X)&=\begin{bmatrix} X^{'T}_4 & X_B(3:4) & X_D(3:4) & X_G(3:4)\end{bmatrix}^T,\\
\phi_{5}(X)&=\begin{bmatrix} X^{'T}_5 & X_B(5:6) & X_C(3:4) & X_E(3:4)\end{bmatrix}^T,\\
\phi_{6}(X)&=\begin{bmatrix} X^{'T}_6 & X_C(5:6) & X_F(3:4) & X_G(5:6)\end{bmatrix}^T.\\
\phi_{7}(X)&=\begin{bmatrix} X^{'T}_7 & X_D(5:6) & X_E(5:6) & X_D(5:6)\end{bmatrix}^T.
\end{align*}
We now verify that $\psi\left(\phi_{\text{In}(t_C)}(X)\right)=Z$. The same exercise can be repeated for other terminals. For $t_C, ~B_j=C$ and
\begin{align*}
A_c(\alpha_1,3)=1 \implies \alpha_1=1, ~B'_{1\beta_u}=C \implies \beta_u=2.
\end{align*}
Similarly, we can compute that $\alpha_2=5, \beta_2=2, \alpha_3=6,\beta_3=1$. This implies that
\begin{align*}
\begin{bmatrix}
U_{\alpha_1\beta_1}\\U_{\alpha_2\beta_2}\\U_{\alpha_3\beta_3}
\end{bmatrix}
=
\begin{bmatrix}
U_{12}\\U_{52}\\U_{61}
\end{bmatrix}
=
\begin{bmatrix}
X_C(1:2)^T\\X_C(3:4)^T\\X_C(5:6)^T
\end{bmatrix}=X_C.
\end{align*}
It can then be seen that $t_C$ can compute the sum of all sources.
\end{example}

\subsection{Sum-networks from Steiner triple systems}
A Steiner triple system (STS) of order $v$ is a $2-(v,k,\lambda)$ design with $k=3$ and $\lambda =1$. The Fano plane is an example of a STS. It is well known that there exist Steiner triple systems for all orders $v \equiv 1 ~\text{or}~3 \Mod 6$ \cite{Stinson}.
\begin{theorem}
Sum-networks constructed from STS with $v$ points are such that their capacity equals $1$ if $ch(\mathcal{F}) = 2$ and $\frac{6}{5+v}$ is $ch(\mathcal{F})$ is odd.
\end{theorem}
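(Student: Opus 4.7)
The plan is to view this theorem as a direct specialization of the general results already proved: Theorem~\ref{thm:ub1} and Lemma~\ref{lemma:lincode1} handle the $ch(\mathcal{F})\mid(k-1)$ case, while Theorem~\ref{thm:ubodd} together with Lemma~\ref{lemma:lincode2} handle $ch(\mathcal{F})\nmid(k-1)$. Since an STS is a $2\text{-}(v,3,1)$ design, $k=3$ and $k-1=2$, so ``$ch(\mathcal{F})\mid(k-1)$'' coincides exactly with ``$ch(\mathcal{F})=2$''. This cleanly splits the theorem into two cases aligned with the two subsections of Section~\ref{sec:lincode}.

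For $ch(\mathcal{F})=2$: I would first invoke Theorem~\ref{thm:ub1} to conclude that the coding capacity is at most $1$. I would then apply Lemma~\ref{lemma:lincode1}, which shows that the scalar code given by $\phi_i(X)=X'_i$ together with the terminal functions in \eqref{eq:even1}--\eqref{eq:even2} correctly delivers $Z$ to every terminal. This achieves rate $m/n=1$, so capacity $=1$.

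For $ch(\mathcal{F})$ odd: Theorem~\ref{thm:ubodd} gives the upper bound $\frac{k(k-1)}{k(k-1)+v-1}=\frac{6}{5+v}$. For achievability I would appeal to the code described in the ``$ch(\mathcal{F})\nmid(k-1)$'' subsection, which attains rate $v'/(v'+b')=\frac{k(k-1)}{k(k-1)+v-1}=\frac{6}{5+v}$ for any admissible $v'$. Lemma~\ref{lemma:lincode2} already shows the block terminals $t_{B_j}$ recover $Z$, and the point terminals $t_{p_i}$ do so by construction.

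The one thing that requires a small check, and is the main (minor) obstacle, is that the construction parameters make sense as integers for every $v$ such that an STS exists. Concretely, one needs $k\mid v'$ (so the selector matrix blocks have integer dimension $v'/k$) and $b'=rv'/k=v'(v-1)/6$ to be an integer. For an STS we have $v\equiv 1$ or $3\pmod 6$, giving the two subcases: (i) if $v\equiv 1\pmod 6$ then $v\equiv 1\pmod 3$, so $x=1$, $v'=v-1$ is a positive multiple of $3$, and $v-1\equiv 0\pmod 6$ makes $b'$ an integer; (ii) if $v\equiv 3\pmod 6$ then $v\equiv 0\pmod 3$, so $x=0$, $v'=v$ is a positive multiple of $3$, and $v'(v-1)/6=v(v-1)/6$ is an integer since $v(v-1)$ is divisible by $6$ whenever $3\mid v$. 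In either case $v',b'$ are valid positive integers, the code of Section~\ref{sec:lincode} applies verbatim, and the achievable rate $v'/(v'+b')$ simplifies to $\frac{6}{5+v}$, matching the upper bound and completing the proof.
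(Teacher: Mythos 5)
Your proposal is correct and follows essentially the same route as the paper, which proves this theorem by directly combining Theorem~\ref{thm:ub1} with Lemma~\ref{lemma:lincode1} for $ch(\mathcal{F})=2$ and Theorem~\ref{thm:ubodd} with Lemma~\ref{lemma:lincode2} (plus the explicit point-terminal decoding) for odd characteristic. Your added verification that $v'$ is a multiple of $3$ and $b'=v'(v-1)/6$ is an integer for $v\equiv 1,3 \Mod 6$ is a sound extra check that the paper leaves implicit.
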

\begin{proof}
For a sum-network constructed using a STS, applying Theorem \ref{thm:ub1}, Lemma \ref{lemma:lincode1}, Theorem \ref{thm:ubodd} and Lemma \ref{lemma:lincode2}, we have the desired result. 
Thus, the message alphabet can greatly impact the coding capacity of a sum-network. 
\end{proof}

\section{Conclusion}
We have constructed a family of sum-networks whose network coding capacity is dependent on the message alphabet (specifically, the characteristic of the finite field) chosen for communication. Previous results in this line described only the solvability of sum-networks for different alphabets using linear network coding. It was not clear how the coding capacity of a sum-network would be affected due to choice of message alphabet, if at all. The results in this paper address that issue.


%
\bibliographystyle{IEEETran}
\bibliography{tip,career_bib}

\section{Appendix}
\subsection{Proof of Lemma \ref{lemma:ub1}}
\label{app:lemma1}
We let for any $p_i \in P$
\begin{align*}
Z_1 = \sum_{p \neq p_i} X_p,~
Z_2 = \sum_{B \in \langle p_i \rangle} X_B ~\text{and}~
Z_3 = \sum_{B \notin \langle p_i \rangle} X_B
\end{align*}
such that $Z= X_{p_i}+ Z_1 + Z_2 + Z_3$.
By our assumption, we know that $Z$ can be evaluated from $\phi_{\text{In}\left(t_{p_i}\right)}(X)$ for all $i=1,2, \ldots,v$, i.e.,
$ H\left( Z | \phi_{\text{In}(t_{p_i})}(X)\right)=0$.

Since $\{X_p : p \neq p_i\}$ and $\{X_B : B \notin \langle p_i \rangle\}$ determine the value of $Z_1$ and $Z_3$ respectively and are also a subset of $\phi_{\text{In}\left(t_{p_i}\right)}(X)$, we get that
\begin{align*}
 &H\left( Z | \phi_{\text{In}\left(t_{p_i}\right)}(X)\right)=0 \implies \hspace{-1mm} H\left( X_{p_i} +Z_2 | \phi_{\text{In}\left(t_{p_i}\right)}(X)\right)=0, \\&\text{i.e.,}~H\left( X_{p_i} \hspace{-1mm}+ Z_2 | \phi_{i}(X),\hspace{-0.5mm}\{X_p : p \neq p_i\},\hspace{-0.5mm}\{X_B : B \notin \langle p_i \rangle\}\right)=0.
\end{align*}
Since all the sources are i.i.d., it can be verified that $X_{p_i}+Z_2$ is conditionally independent of both $\{X_p : p \neq p_i\}$ and $\{X_B : B \notin \langle p_i \rangle\}$ given $\phi_{i}(X)$. Hence, we have that $H(X_{p_i}+Z_2 | \phi_{i}(X))=0$.

\subsection{Proof of Lemma \ref{lemma:ub2}}
\label{app:lemma2}
We let for any $B_j \in \mathcal{B}$
\begin{align*}
Z_1=\sum_{p \in B_j}\hspace{-1mm} X_p, Z_2=\sum_{p \notin B_j}\hspace{-1mm} X_p, Z_3=\sum_{B \in \langle B_j \rangle}\hspace{-1mm}X_B, Z_4=\sum_{B \notin \langle B_j \rangle}\hspace{-1mm}X_B
\end{align*}
such that $Z= Z_1+Z_2+Z_3+Z_4$.
By our assumption, for all $i=1,2,\ldots,b$, $H\left(Z | \phi_{\text{In}\left(t_{B_j}\right)}(X)\right)=0$.

The sets $ \{X_p : p \notin B_j\}$ and $ \{X_B : B \notin \langle B_j \rangle\}$ determine the value of $Z_2$ and $Z_4$ respectively and are also subsets of $\phi_{\text{In}\left(t_{B_j}\right)}(X)$.  Hence, we have that
\begin{align*}
H\left( Z | \phi_{\text{In}\left(t_{B_j}\right)}(X)\right)=&0 \implies H\left( Z_1 + Z_3 |\phi_{\text{In}\left( t_{B_j}\right)}(X) \right)=0,\\ \text{where,}~
\phi_{\text{In}\left( t_{B_j}\right)}(X)=&\{\phi_{i}(X) : p_i \in B_j\} \cup \{X_{p} : p \notin B_j\} \\& \cup \{X_B : B \notin \langle B_j \rangle\}.
\end{align*}
Since all the sources are i.i.d., it can be verified that $Z_1+Z_3$ is conditionally independent of both $\{X_p : p \notin B_j\}$ and $\{X_B : B \notin \langle B_j \rangle\}$ given the set of random variables $\{\phi_{i}(X) : p_i \in B_j\}$. This gives us the result that $H(Z_1+Z_3 | \{\phi_{i}(X) : p_i \in B_j\})=0$.

\subsection{Proof of claim in Theorem \ref{thm:ub1}}
\label{app:claim}
We want to prove the identity
\begin{align}
P(X_1'=x_1',\ldots,X_v'=x_v')=\prod_{i=1}^{v}P(X_i'=x_i').
\label{eq:x'}
\end{align}
Let $\mathcal{X}$ be the set of solutions to the following system of linear equations:
\begin{align*}
x_{p_1}+&\sum_{B \in \langle p_1 \rangle} x_B=x'_1\\
x_{p_2}+&\sum_{B \in \langle p_2 \rangle} x_B=x'_2\\
&\hspace{4mm}\vdots\\
x_{p_v}+&\sum_{B \in \langle p_v \rangle} x_B=x'_v.
\end{align*}Choosing any value from $\mathcal{F}^m$ for each of the variables in $\{x_B : B \in \mathcal{B}\}$ fixes the value of $x_{p_i}$ given $x'_i$ for all $i \in \{1,2,\ldots,v\}$ and vice versa.
The probability of the set in \eqref{eq:x'} is equal to the probability with which the set of random variables $\left\{\{X_{p_i}\}_1^v \cup \{X_{B_j}\}_1^b\right\}$ take values in $\mathcal{X}$. Since all random variables in $\left\{\{X_{p_i}\}_1^v \cup \{X_{B_j}\}_1^b\right\}$ are uniform i.i.d. over $\mathcal{F}^m$ with $|\mathcal{F}|=q$ we can expand the LHS of \eqref{eq:x'} as
\begin{align*}
&\sum_{\mathcal{X}} \prod_{i=1}^{v}P(X_{p_i}=x_{p_i})\prod_{j=1}^{b}P(X_{B_j}=x_{B_j})\\
&=\left[\frac{1}{q^m}\right]^b\left[\frac{1}{q^m}\right]^v\left(q^m\right)^b=\left[\frac{1}{q^m}\right]^v.
\end{align*}
Also, we have that
\begin{align*}
P(X'_i=x'_i)=\sum_{\mathcal{X}'}P(X_{p_i}=x_{p_i})\prod_{B\in \langle p_i \rangle}P(X_{B}=x_{B})
\end{align*} where $\mathcal{X}'$ is the set of solutions to the variables $\{x_{p_i},\{x_B\}_{B \in \langle p_i \rangle}\}$ such that $x'_i=x_{p_i}+\sum_{B \in \langle p_i \rangle}x_B$.
Then
\begin{align*}
P(X'_i=x'_i)=\frac{1}{q^m}\left[\frac{1}{q^m}\right]^r\left(q^m\right)^r=\frac{1}{q^m},
\end{align*}
and the RHS of \eqref{eq:x'} is
\begin{align*}
\prod_{i=1}^{v}P(X_i'=x_i')=\left[\frac{1}{q^m}\right]^v.
\end{align*}
\subsection{Proof of Lemma \ref{lemma:lincode1}}
\label{app:lemma3}
 Consider $p_i \in P$, and let
\begin{align*}
Z_1=\sum_{p \neq p_i}X_p,~Z_2=\sum_{B \in \langle p_i \rangle}X_B~\text{and}~ Z_3=\sum_{B \notin \langle p_i \rangle}X_B
\end{align*}
such that $\phi_{i}(X)=X_{p_i}+Z_2$. Then from \eqref{eq:even1} for $i=1,2,\hdots v$,
\begin{align*}
\psi\left(\phi_{\text{In}\left(t_{p_i}\right)}(X)\right)\hspace{-0.5mm}=\phi_{i}(X)\hspace{-0.5mm}+\hspace{-0.5mm}Z_1\hspace{-0.5mm}+\hspace{-0.5mm}Z_3\hspace{-0.5mm}= \hspace{-0.5mm}X_{p_i}\hspace{-0.5mm}+Z_2\hspace{-0.5mm}+\hspace{-0.5mm}Z_1\hspace{-0.5mm} +\hspace{-0.5mm}Z_3\hspace{-0.5mm}=\hspace{-0.5mm}Z.
\end{align*}
Now we look at terminals of the form $t_{B_j}, B_j \in \mathcal{B}$. For this, we let
for any $B_j \in \mathcal{B}$
\begin{align*}
&Z_1 = \sum_{p \in B_j} X_p,~
Z_2 = \sum_{p \notin B_j} X_p, \\
&Z_3 = \sum_{B \in \langle B_j \rangle \setminus B_j} X_B,~\text{and}~
Z_4 = \sum_{B \notin \langle B_j \rangle} X_B
\end{align*} such that $Z=X_{B_j}+Z_1+Z_2+Z_3+Z_4$.
 Since $\mathcal{D}$ is a 2-design with $\lambda = 1$, we have that for any $p,p' \in B$, $\langle p \rangle \cap \langle p'\rangle=B$. This implies that
\begin{align*}
\sum_{ p_l \in B_j} \phi_{l}(X)=Z_1 + Z_3 + kX_{B_j}.
\end{align*}
Since $ch(\mathcal{F}) \mid (k-1)$, we have that $kX_{B_j} = X_{B_j}$. Using this in \eqref{eq:even2} we get that for $j=1,2, \hdots b$
\begin{align*}
\psi\left(\phi_{\text{In}\left(t_{B_j}\right)}(X)\right)=Z_1 + Z_3 + X_{B_j}+ Z_2+ Z_4=Z.
\end{align*}
\subsection{Proof of Lemma \ref{lemma:lincode2}}
\label{app:theorem3}
We let
\begin{align*}
&Z_1 = \sum_{p \notin B_j} X_p,~
Z_2 = \sum_{B \notin \langle B_j \rangle} X_B, \\
&Z_3 = \sum_{B \in \langle B_j \rangle \setminus B_j} X_B,~\text{and}~
Z_4 = \sum_{p \in B_j} X_p
\end{align*} such that $Z=X_{B_j}+Z_1+Z_2+Z_3+Z_4$.
Then using \eqref{eq:th1b},
\begin{align*}
&\sum_{ p_i \in B_j}\begin{bmatrix} I_{v'} & \mathbf{0}_{v' \times b}\end{bmatrix} \phi_{i}(X)=\sum_{ p_i \in B_j} X'_i=Z_3+Z_4+kX_{B_j}.
\end{align*}
Using this and lemma \ref{lemma:Xb} in \eqref{eq:psi} we get that
\begin{align*}
\psi(\phi_{\text{In}(t_{B_j})}(X))\hspace{-0.5mm}=\hspace{-0.5mm}Z_3\hspace{-0.5mm}+\hspace{-0.5mm}Z_4\hspace{-0.5mm}+\hspace{-0.5mm}kX_{B_j}\hspace{-1mm}-\hspace{-0.5mm}(k-1)X_{B_j}\hspace{-1mm}+\hspace{-0.5mm} Z_1\hspace{-0.5mm}+\hspace{-0.5mm}Z_2\hspace{-0.5mm}=\hspace{-0.5mm}Z.
\end{align*}
\subsection{Remark about non-applicability of Theorem \textit{VI.5} in \cite{CannonsDFZ06} for sum-networks}
Theorem VI.5 in\cite{CannonsDFZ06} states that the message capacity of a network is independent of the alphabet used.
Consider a simple sum-network shown in Figure \ref{fig:sumnw_appendix}, terminal $t$ wants to evaluate $X_1+X_2$ where $X_1,X_2 \in A$ are random variables observed at source nodes $s_1,s_2$ respectively.
\begin{figure}
\centering
\includegraphics[]{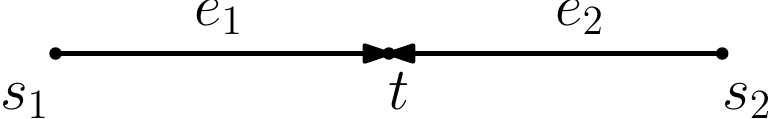}
\caption{A simple sum-network. Both edges can transmit one symbol in $A$ from tail to head in one channel use.}
\label{fig:sumnw_appendix}
\end{figure}
We have a simple scalar network code (rate $k/n=1, k=n=1$) that satisfies the problem, described as follows.
\begin{enumerate}
\item Edge functions:
\begin{align*}
\phi_{e_1}(X_1) \equiv f_{e_1}(X_1)&=X_1,\\\phi_{e_2}(X_2) \equiv f_{e_2}(X_2)&=X_2.
\end{align*}
\item Decoding function:
\begin{align*}
\psi(\phi_{e_1}(X_1),\phi_{e_2}(X_2)) \equiv &f_{t,\mathbf{\Sigma}}\left(f_{e_1}(X_1),f_{e_2}(X_2)\right)\\=&f_{e_1}(X_1)+f_{e_2}(X_2)
\end{align*} where $\mathbf{\Sigma}=X_1+X_2$ is the only message terminal $t$ is interested in.
\end{enumerate}
We use the procedure outlined in \cite{CannonsDFZ06} to extend the network code for another alphabet $B$. Let $A=GF(3), B=GF(2)$. Setting $\epsilon = 2^{1-\gamma}/\log_2 3$ where $\gamma>1$, we obtain the following values
\begin{align*}
t=2^\gamma, n'=\left\lceil \frac{2^\gamma}{\log_2 3} \right\rceil ~\text{and}~ k'=\lfloor n' \rfloor -1.
\end{align*}
Let $h_0 : B \rightarrow A$ be such that
\begin{align*}
h_0(x)=\left\{\begin{matrix}
0 & ~\text{if}~x=0,\\
1 & ~\text{if}~x=1.
\end{matrix}\right.
\end{align*} and let $\hat{h}_0 : A \rightarrow B$ such that $\hat{h}_0(h_0(x))=x ~\text{for all}~ x \in B$ and arbitrary otherwise.
Then we can define an injection $\mathbf{h_0} : B^{k'} \rightarrow A^{t}$ as the componentwise application of $h_0$ to each of the elements in the argument. That is
\begin{align*}
\mathbf{h_0}(b_1,b_2,\ldots,b_{k'})= \begin{bmatrix} h_0(b_1) & h_0(b_2) & \ldots & h_0(b_{k'}) & \mathbf{0}_{t-k'} \end{bmatrix}
\end{align*} where $b_1,b_2, \ldots b_{k'} \in B$ and $\mathbf{0}_{t-k'}$ is a zero vector with $t-k'$ components. We define $\mathbf{\hat{h}_0} : A^t \rightarrow B^{k'}$ as
\begin{align*}
\mathbf{\hat{h}_0}(a_1,a_2,\ldots,a_t) = \begin{bmatrix} \hat{h}_0(a_1) & \hat{h}_0(a_2) & \ldots & \hat{h}_0(a_{k'}) \end{bmatrix}
\end{align*} where $a_1,a_2,\ldots,a_t \in A$.

Also we let $\mathbf{h}: A^t \rightarrow B^{n'}$ be an arbitrary injection and $\mathbf{\hat{h}}: B^{n'} \rightarrow A^t$ is such that $\mathbf{\hat{h}}(\mathbf{h}(\mathbf{x}))=\mathbf{x} ~\text{for all}~ \mathbf{x} \in A^t$ and arbitrary otherwise.
We now use the extended network code to satisfy the sum network for when the source random variables take values in the alphabet $B^{k'}$. Suppose a particular realization of $\mathbf{X_1} \in B^{k'}$ and $\mathbf{X_2} \in B^{k'}$ is such that
\begin{align*}
\mathbf{x_1}=(1,1,\ldots,1)=\mathbf{1}_{k'} ~\text{and}~ \mathbf{x_2}=(1,1,\ldots,1)=\mathbf{1}_{k'}.
\end{align*}
Following steps in \cite{CannonsDFZ06} for the decoding function we get that
\begin{align*}
g_{t,\mathbf{\Sigma}}(f_{e_1}(\mathbf{x_1}),f_{e_2}(\mathbf{x_2}))&=\mathbf{\hat{h}_0}(f_{t,\mathbf{\Sigma}}(f_{e_1}(\mathbf{h_0}(\mathbf{x_1})),f_{e_2}(\mathbf{h_0}(\mathbf{x_2}))))\\
&=\mathbf{\hat{h}_0}(\mathbf{h_0}(\mathbf{x_1})+\mathbf{h_0}(\mathbf{x_2}))\\
&=\mathbf{\hat{h}_0}([\mathbf{1}_{k'} ~\mathbf{0}_{t-k'}]+[\mathbf{1}_{k'} ~\mathbf{0}_{t-k'}])\\
&=\mathbf{\hat{h}_0}([\mathbf{2}_{k'} ~\mathbf{0}_{t-k'}])
\end{align*} where $\mathbf{2}_{k'}$ is a vector of $k'$ $2$'s.

Since $\hat{h}_0(2)$ is arbitrarily assigned, $\mathbf{\hat{h}_0}([\mathbf{2}_{k'} ~\mathbf{0}_{t-k'}])$ need not equal $\mathbf{0}_{k'}$ which is the right value of $\mathbf{X_1}+\mathbf{X_2}$. Thus the extension of the network code does not correctly evaluate the sum in $B^{k'}$.

The characteristics of a particular alphabet only affect the \textit{value} of a function of the source random variables and not the random variables themselves. The extension of the network code from one alphabet to another works for the case of multiple unicast as the messages demanded by any terminal are a subset of all the messages observed in the network and not a function of them.

\end{document}